\newcommand{\subvec}[1]{\underline{S}(#1)}
\newcommand{\supvec}[1]{\overline{S}(#1)}
\newcommand{\scalprod}[2]{\left\langle #1,#2 \right\rangle}
\newcommand{\vectorize}[1]{\overline{#1}}
\newtheorem{theorem}{Theorem}[section]
\newtheorem{proposition}[theorem]{Proposition}
\newtheorem{example}[theorem]{Example}
\newtheorem{remark}[theorem]{Remark}
\begin{document}
%
\title{Nonlinearity Computation for Sparse Boolean Functions}
%
%
%

\author{\c{C}a\u{g}da\c{s}~\c{C}al\i k,~\IEEEmembership{Member,~IEEE}
\thanks{The author is with the Institute of Applied Mathematics, Middle East Technical University, Ankara, Turkey.}%
\thanks{The author is partially supported by T\"{U}B\.{I}TAK under grant no. 109T672.}%
\thanks{This article is a revised version of a part of the author's Ph.D. thesis \cite{Ca13}.}
}
\maketitle

\begin{abstract}
An algorithm for computing the nonlinearity of a Boolean function
from its algebraic normal form (ANF) is proposed. By generalizing
the expression of the weight of a Boolean function in terms of its
ANF coefficients, a formulation of the distances to linear
functions is obtained. The special structure of these distances can be exploited to reduce 
the task of nonlinearity computation to solving an associated binary integer programming problem. The proposed algorithm can be used in cases where applying the Fast Walsh transform is infeasible, typically when the number of input variables exceeds 40.

\end{abstract}

\begin{IEEEkeywords}
Boolean functions, Algebraic normal form, Nonlinearity, Walsh-Hadamard transform
\end{IEEEkeywords}

%
\IEEEpeerreviewmaketitle

\section{Introduction}
%
%
%
%

Boolean functions have various applications
in cryptology, coding theory, digital circuit theory, etc. \cite{MS77}. Among the properties associated with a Boolean function, nonlinearity is an important criterion regarding the
security point of view. Nonlinearity is defined as the minimum
distance of a Boolean function to the set of affine functions and
functions used in secrecy systems are expected to have high
nonlinearity in order to resist certain cryptanalytic attacks. This makes
the nonlinearity computation a necessary task in order to prove
that the claimed level of security is achieved.

Boolean functions have several representations and for each
representation, either there is a direct method to compute the
nonlinearity (e.g. Walsh spectrum), or a transformation to a
representation of this type is available. Two common ways of
representing Boolean functions are truth table and algebraic normal
form (ANF). Truth table is the list of all output values of a
Boolean function in a predetermined order. Computing nonlinearity
from the truth table can be done by Fast Walsh transform (FWT),
which constructs the Walsh spectrum of the function and the entry with the maximum absolute value in the spectrum determines nonlinearity.
As the truth table of an $n$-variable Boolean function consists of
$2^n$ entries, the cost of storing the truth table increases
exponentially in $n$. ANF is a preferred representation either
when it is impractical to store the truth table, or it is more
efficient and/or secure to calculate the output of the Boolean function from
an expression of the input variables.

Unless the ANF of a Boolean function belongs to a class that reveals
its nonlinearity (e.g. affine functions), the task of computing the
nonlinearity from the ANF can be performed by constructing the
truth table from the ANF by Fast M\"{o}bius transform and then
applying FWT on it. This process has a computational complexity of
$\mathcal{O}(n2^n)$ both to transform the ANF to truth table and to
apply FWT. Clearly, when $n$ gets larger, say $n>40$, considering
the computation and memory resources of current computers, applying
these transformations becomes infeasible.

Expression of the weight of a Boolean function in terms of its ANF
coefficients is introduced by Carlet and Guillot \cite{CG99}, which
allows one to compute the weight from the ANF with a
complexity of $\mathcal{O}(2^p)$ operations, if the Boolean function
consists of $p$ monomials. Two related works utilizing this expression
propose more efficient methods for Walsh coefficient computation and
weight computation from the ANF \cite{GS09, CD12}. In this work, by investigating the
distance of a Boolean function to the set of affine functions in
terms of ANF coefficients, an algorithm to compute the nonlinearity
for Boolean functions with high number of inputs is devised.

This paper is organized as follows: Section 2 gives basic
definitions about Boolean functions and introduces the notation used
in the paper. Section 3 investigates the expression of the distance
to the set of affine functions in terms of ANF coefficients by introducing
the Linear Distance Matrix. In Section 4, the algorithm for
computing the nonlinearity by exploiting the structure of this matrix is described, with a discussion on the complexity of the algorithm, techniques for performance improvement and implementation results. Section 5 gives the conclusion.

\section{Preliminaries}


Let $\mathbb{F}_2 = \{0,1\}$ be the finite field with two elements
and $\mathbb{F}_2^n$ be the $n$-dimensional vector space over
$\mathbb{F}_2$. The addition operation in $\mathbb{F}_2$ and
$\mathbb{F}_2^n$ will be denoted by $\oplus$, whereas $+$ will be
used for integer addition. Binary logical operators \texttt{and} and
\texttt{or} will be denoted by $\wedge$ and $\vee$, respectively.
Unary operator $\neg$ will stand for the bitwise complement of a
vector. Inner product of two vectors $x,y\in\mathbb{F}_2^n$ is
$\scalprod{x}{y}=x_1y_1\oplus\cdots\oplus x_ny_n$. Support of a
vector $x = (x_1, \ldots, x_n) \in \mathbb{F}_2^n$ is defined as
$supp(x)=\{i \mid x_i\neq 0\}$. Weight $wt(x)$ of a vector is the
number of its non-zero components, i.e., $wt(x)=|supp(x)|$.

Elements of $\mathbb{F}_2^n$ can be identified with integers modulo
$2^n$ by associating the vector $\vectorize{x} = (x_1, \ldots, x_n) \in
\mathbb{F}_2^n$ with the integer $x =
\sum\limits_{i=1}^{n}{x_i2^{n-i}}$.

An $n$-variable Boolean function $f : \mathbb{F}_2^n \rightarrow
\mathbb{F}_2$ specifies a mapping from the $n$-dimensional vector
space to $\mathbb{F}_2$. The sequence $T_f = ( f(\vectorize{0}),
f(\vectorize{1}), \ldots , f(\vectorize{2^n-1}) )$ is called the
\emph{truth table} of $f$. The \emph{support} (resp. \emph{weight})
of $f$ is the support (resp. weight) of its truth table, hence the
weight of the function is $wt(f)=|supp(f)|=|\{x \in \mathbb{F}_2^n
\mid f(x)=1\}|$. $f$ is called \emph{balanced} if $wt(f)=2^{n-1}$. A
Boolean function can be represented as a multivariate polynomial
called the \textit{algebraic normal form (ANF)} such that

\begin{equation}
\label{eqn_anf_vec} f(x_1,\ldots,x_n)=\bigoplus\limits_{u\in
\mathbb{F}_2^n}{a_ux^u}
\end{equation}
where $x^u=x_1^{u_1}x_2^{u_2}\ldots x_n^{u_n}$ is a \emph{monomial}
composed of the variables for which $u_i=1$ and $a_u \in
\mathbb{F}_2$ is called the ANF coefficient of $x^u$. Degree of the
monomial $x^u$ is $wt(u)$ and the highest degree monomial with the non-zero
ANF coefficient determines the degree of a function.

Throughout the text, for the sake of simplicity, the ANF
coefficients of a Boolean function will be denoted by $a_i$ for $i\in\{0,1,\cdots,2^n-1\}$ being an integer. In this notation, $a_i$
is the coefficient of the monomial $x^{\vectorize{i}}$. For example,
$a_0,a_1,a_2,a_3$ and $a_{2^n-1}$ are the ANF coefficients of the
monomials $1,x_n,x_{n-1},x_{n-1}x_{n-2}$ and $x_1x_2\cdots x_n$,
respectively.

Distance between two functions is measured with the number of truth
table entries they differ by, i.e.,
\[d(f,g)=|\{x\in\mathbb{F}_2^n\mid f(x)\neq g(x)\}|=wt(f\oplus g).\]

For $w\in \mathbb{F}_2^n$ and $c\in \mathbb{F}_2$, Boolean functions
of the form
\[
f(x)= \scalprod{w}{x}\oplus c
\]
are called \emph{affine} functions, and in particular when $c=0$
they are called \emph{linear} functions. $l_w$ and
$l_w^{'}=l_w\oplus 1$ will be used to denote linear functions and
their complements, respectively. The set of affine functions
is shown by $A_n$. Nonlinearity $N_f$ of a Boolean function $f$
is the minimum of the distances between $f$ and the set of affine
functions, i.e., \[N_f= \min\limits_{g\in A_n}{d(f,g)}.\]

For two vectors $x,y \in \mathbb{F}_2^n$, if $supp(x) \subseteq
supp(y)$, $x$ is called a sub-vector of $y$, and $y$ is called a
super-vector of $x$, which will be denoted by $x\preceq y$ or $y
\succeq x$. The set of sub-vectors of a vector is denoted by
$\underline{S}(x)=\{y \in \mathbb{F}_2^n \mid y \preceq x\}$, and
the set of super-vectors of a vector is denoted by
$\overline{S}(x)=\{y \in \mathbb{F}_2^n \mid y \succeq x\}$. The
following observations follow from the definition of sub-vector and
super-vector:

\begin{eqnarray}
|\subvec{x}| &=& 2^{wt(x)} \label{eqn_subvec}\\
|\supvec{x}| &=& 2^{n-wt(x)}\label{eqn_supvec}
\end{eqnarray}

\section{Distance to Linear Functions}

Any integer valued function $G: \mathbb{F}_2^m\rightarrow
\mathrm{Z}$ defined on binary $m$-tuples can be represented by

\begin{equation}
\label{eqn_general}
G(x_1,\ldots,x_m)=\sum\limits_{I\subseteq\{1,\cdots,m\}}^{}{\lambda_I
x^I}
\end{equation}
where $x_i\in\mathbb{F}_2$ for $1\leq i \leq m$ and $\lambda_I \in
\mathrm{Z}$ is the coefficient of the product $x^I=x_{i_1}\ldots
x_{i_d} \mbox{ for } I=\{i_1,\cdots,i_d\}.$ Here, $x_i\in
\mathbb{F}_2$ implies $x_i^2=x_i$, therefore all the terms $x^I$ are distinct products of input variables and the function
has $2^m$ terms. For the functions of interest to this study, $G$ maps the ANF
coefficients of an $n$-variable Boolean function to the distance to a particular linear function,
hence the number of input variables is $m=2^n$. When $m$ is large, it
is not possible to list all of the ANF coefficients of a Boolean
function. Instead, the support of the ANF is supplied, which is assumed to be relatively small sized. For an input
$x=(x_1,\ldots,x_m)$ and $supp(x)=\{i_1,\cdots,i_p\}$, the output of
the function will be the sum
\[
G(x)=\sum\limits_{I\subseteq\{i_1,\cdots,i_p\}}\lambda_I
\]
consisting of $2^p$ coefficients $\lambda_I$, associated with all nonzero products $x^I$. 
Using this approach, the function $G$ can be evaluated in $\mathcal{O}(2^p)$ operations if the complexity of computing each $\lambda_I$ is negligible. In the following part of this section, it
will be shown that the coefficients of the functions mapping the ANF
coefficients to the weight and to the distance to a particular
linear function can be computed in a very simple way. On the
contrary, this is not the case for nonlinearity. This will lead to
constructing a method to compute the nonlinearity without trying to
compute the coefficients of the function, but by exploiting the
properties of the previously mentioned functions' coefficients that
could be easily computable.

The function mapping the ANF coefficients to the weight of a Boolean
function was introduced in \cite{CG99}, which will be called the \emph{weight function}.
In order to render the remaining of the
text more comprehensible, derivation of the coefficients of this function is going to be explained.

The output of a Boolean function can be calculated in terms of its ANF coefficients as follows:

\begin{equation}
\label{eqn_ttanf}
f(x)= \bigoplus\limits_{u\preceq x}^{}a_u.
\end{equation}
Table \ref{tab_anf2tt} shows the truth table entries of 3-variable
Boolean functions in terms of ANF coefficients. When an ANF
coefficient $a_i$ contributes to the output of a Boolean function at
a point, it is said that $a_i$ appears in that truth table entry.
The sum of the truth table entries gives the weight of the
function. Weight can be expressed as a function over the integers by
replacing the addition operation in $\mathbb{F}_2$ in each truth
table entry with integer addition operation. $\mathbb{F}_2$ addition
can be converted to addition over the integers by a well known
formula, generalizing the fact that $a\oplus b=a+b-2ab$:

\begin{eqnarray}
\label{eqn_poincare}
\bigoplus\limits_{i=1}^{m}{a_i}&=&\sum\limits_{k=1}^{m}{(-2)^{k-1}}\sum\limits_{1\leq
i_1 < \cdots < i_k \leq m}{a_{i_1}\cdots a_{i_k}}.
\end{eqnarray}

This formula states that the expression of addition over the
integers consists of all combinations of products of terms, with a leading
coefficient related to the number of terms in the product, such as all
degree two terms having the coefficient $-2$ and all degree three terms
having the coefficient $4$, etc.

\begin{proposition}
\label{prop_anfcount} For an $n$-variable Boolean function, ANF
coefficient $a_u$ contributes to the output in $2^{n-wt(u)}$ points.
\end{proposition}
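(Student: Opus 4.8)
The plan is to read the claim directly off the output formula (\ref{eqn_ttanf}), so the proof will be short and combinatorial rather than computational. First I would recall the convention stated just before the proposition: $a_u$ is said to \emph{appear} in the truth table entry $f(x)$ exactly when it contributes a term to the right-hand side of $f(x)=\bigoplus_{u'\preceq x}a_{u'}$. By inspection of that sum, $a_u$ contributes to $f(x)$ if and only if $u$ ranges in the index set $\{u'\mid u'\preceq x\}$, i.e. if and only if $u\preceq x$. This is the single conceptual step, and it is the one I would write down carefully, since it is what converts the qualitative notion ``$a_u$ appears at a point'' into the precise set-membership condition $u\preceq x$.

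With that identification in hand, the set of all points at which $a_u$ appears is, by definition, $\{x\in\mathbb{F}_2^n\mid u\preceq x\}$, which is precisely the set of super-vectors $\supvec{u}$ introduced in Section~2. The number of such points is therefore $|\supvec{u}|$, and I would finish by quoting the cardinality identity (\ref{eqn_supvec}), namely $|\supvec{u}|=2^{\,n-wt(u)}$, to obtain the stated count of $2^{n-wt(u)}$ contributing points.

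I do not expect a genuine obstacle here: once the appearance condition is matched with the partial order $\preceq$, the result is an immediate corollary of the already-established super-vector count (\ref{eqn_supvec}), whose own justification is the elementary observation that a super-vector of $u$ is obtained by freely choosing the entries outside $supp(u)$ while fixing the $wt(u)$ entries of $supp(u)$ to $1$. If anything requires care, it is only making explicit that the indexing variable $u'$ in (\ref{eqn_ttanf}) runs over sub-vectors of $x$, so that fixing the coefficient index $u$ and letting $x$ vary dualizes ``sub-vector of $x$'' into ``super-vector of $u$''; stating this duality cleanly is what makes the one-line conclusion rigorous.
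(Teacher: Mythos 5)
Your proof is correct and follows essentially the same route as the paper's: both read off from (\ref{eqn_ttanf}) that $a_u$ contributes at $x$ exactly when $u\preceq x$, identify the contributing points as $\supvec{u}$, and conclude via the cardinality identity (\ref{eqn_supvec}). Your extra remark about dualizing ``sub-vector of $x$'' into ``super-vector of $u$'' merely makes explicit a step the paper leaves implicit.
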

\begin{proof}
From (\ref{eqn_ttanf}), $a_u$ contributes to the output of the
function at a point $x$ if  $u \preceq x$. This means that $a_u$
contributes to the function output at points $\supvec{u}$, whose
size is equivalent to $2^{n-wt(u)}$ by (\ref{eqn_supvec}).
\end{proof}

\begin{proposition}
\label{prop_union} For each set $A=\{a_{u_1},\cdots,a_{u_k}\}$ of
ANF coefficients, there exists a coefficient $a_v$ with the property
$supp(v)=\bigcup\limits_{1\leq i \leq k}{supp({u_i})}$, such that
the truth table entries $a_v$ appears are exactly the same as the
truth table entries all the coefficients in $A$ appear together. Such a coefficient $a_v$ will
be called the representative coefficient of $\prod\limits_{1\leq i \leq k}{a_{u_i}}$.
\end{proposition}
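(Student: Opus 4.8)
The plan is to translate the combinatorial statement about truth table entries into set-theoretic containments between supports. First I would recall from (\ref{eqn_ttanf}), exactly as used in the proof of Proposition \ref{prop_anfcount}, that a coefficient $a_u$ appears in the truth table entry at a point $x$ precisely when $u \preceq x$; equivalently, the set of points at which $a_u$ appears is $\supvec{u}$. Consequently, the set of points at which every coefficient in $A=\{a_{u_1},\cdots,a_{u_k}\}$ appears simultaneously is the intersection $\bigcap_{1\leq i\leq k}\supvec{u_i}$.

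Second, I would define $v$ to be the vector in $\mathbb{F}_2^n$ whose support is $supp(v)=\bigcup_{1\leq i\leq k}supp(u_i)$, equivalently the bitwise \texttt{or} $v=u_1\vee\cdots\vee u_k$. Since a vector is uniquely determined by its support, such a $v$ exists and is unique, so under the integer indexing of coefficients the coefficient $a_v$ is well-defined.

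The heart of the argument is then the identity $\bigcap_{1\leq i\leq k}\supvec{u_i}=\supvec{v}$, which I would establish by chasing definitions: a point $x$ lies in $\bigcap_i\supvec{u_i}$ iff $u_i\preceq x$ for every $i$, i.e. $supp(u_i)\subseteq supp(x)$ for every $i$; this holds iff $\bigcup_i supp(u_i)\subseteq supp(x)$, which by construction of $v$ is exactly $supp(v)\subseteq supp(x)$, i.e. $v\preceq x$, i.e. $x\in\supvec{v}$. Both inclusions follow at once from this single equivalence.

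Finally I would read off the conclusion: the set $\supvec{v}$ of points where $a_v$ appears coincides with the set of points where all coefficients of $A$ appear together, which is the asserted property, and this is also where the product $\prod_{1\leq i\leq k}a_{u_i}$ is present in the symbolic expansion, justifying the name representative coefficient. I expect no serious obstacle here; the only points requiring care are the well-definedness of $v$ (the passage between a vector and its support) and the clean reduction of an intersection of super-vector sets to the super-vector set of the bitwise union, after which the claim is immediate.
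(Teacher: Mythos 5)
Your proof is correct and takes essentially the same approach as the paper: both identify the points where all coefficients of $A$ appear together as exactly those $x$ with $\bigcup_{1\leq i\leq k}supp(u_i)\subseteq supp(x)$, and then observe that choosing $v$ with $supp(v)=\bigcup_i supp(u_i)$ makes this set equal to $\supvec{v}$. Your explicit formulation of the key step as the intersection identity $\bigcap_{1\leq i\leq k}\supvec{u_i}=\supvec{v}$ is just a cleaner packaging of the same definition-chase the paper carries out in prose.
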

\begin{proof}
From (\ref{eqn_ttanf}), each $a_{u_i}$ appears in truth table
entries $\supvec{u_i}$. If $y$ is a truth table entry such that all
$a_{u_i}$'s appear together then $\bigcup\limits_{1\leq i \leq
k}^{}{supp(u_i)}\subseteq supp(y)$. Call the minimum weight vector
satisfying this property $v$, the case where the set equality
occurs. Then, by definition of a super-vector, the set $\supvec{v}$
also has the same property of covering the supports of $u_i$'s, and
these are the points the ANF coefficient $a_v$ appears in the truth table.
Therefore, if vector $v$ is chosen such that
$supp(v)=\bigcup\limits_{1\leq i \leq k}^{}{supp(u_i)}$, $a_v$
appears at exactly the same truth table entries as
$\{a_{u_1},\cdots,a_{u_k}\}$ appear together.
\end{proof}

Combining Proposition \ref{prop_anfcount} and (\ref{eqn_poincare}),
one obtains the weight function of an $n$-variable Boolean function
as follows:

\begin{equation}\label{anf2nl_eqn_weight}
F(a_0,\dots,a_{2^n-1})=\sum\limits_{I\subseteq\{0,\dots,2^n-1\}}{\lambda_I a^I}
\end{equation}
where $\lambda_I\in\mathrm{Z}$ is called the \emph{weight
coefficient} of the product $a^I=\prod\limits_{i\in I}^{}{a_i}$. If
$I=\emptyset$, the value of $\lambda_I$ is found to be zero. For a
non-empty set $I=\{i_1,\cdots,i_k\}$, the value of $\lambda_I$ is
determined by two factors:
\begin{eqnarray}
\label{eqn_lambda} \lambda_I&=&d_I n_I,\\
\label{eqn_di} d_I&=&(-2)^{k-1},\\
\label{eqn_ni} n_I&=&2^{n-wt(\vectorize{i_1}\vee \dots \vee
\vectorize{i_k})}.
\end{eqnarray}
The value of $d_I$ comes from (\ref{eqn_poincare}) and $n_I$ is the
number of times the product $a^I$ occurs in the truth
table entries when each entry is expressed with addition over the integers. The value of $n_I$ is calculated by Proposition \ref{prop_union}, which is related to the number of distinct input variables appearing in the monomials of contributing ANF coefficients.

\begin{example}
Let $n=3$ and consider the weight coefficient of $a_1a_2$ in the weight
function. After the conversion of binary addition to integer
addition, $a_1a_2$ appears in a truth table entry if and only if
both of $a_1$ and $a_2$ are present at that entry. According to
Proposition \ref{prop_union}, $a_3$ appears as many times as
$a_1a_2$ appears in the truth table since $\vectorize{1} \vee
\vectorize{2}=\vectorize{3}$, and the entries they appear are
exactly the same, which are the fourth and the last rows of Table \ref{tab_anf2tt}. Hence, by Proposition
\ref{prop_anfcount}, $a_1a_2$ appears $2^{3-wt(\vectorize{3})}=2$ times and since $a_1a_2$ consists of two
terms, each one of these terms will have the constant
$(-2)^{2-1}=-2$ arising from the conversion of addition
(\ref{eqn_poincare}). As a result, the coefficient of $a_1a_2$
becomes $(-2).2=-4$.
\end{example}

\begin{table}[!t]
\renewcommand{\arraystretch}{1.3}
\caption{Truth Table in terms of ANF Coefficients.}
\label{tab_anf2tt}
\centering
\begin{tabular}{ccc|c|l}
\hline
$x_1$& $x_2$& $x_3$ & ANF & Truth table\\
\hline
0& 0&0& $a_0$ & $a_0$\\
0& 0&1& $a_1$ & $a_0 \oplus a_1$\\
0& 1&0& $a_2$ & $a_0 \oplus a_2$\\
0& 1&1& $a_3$ & $a_0 \oplus a_1 \oplus a_2 \oplus a_3$\\
1& 0&0& $a_4$ & $a_0 \oplus a_4$\\
1& 0&1& $a_5$ & $a_0 \oplus a_1 \oplus a_4 \oplus a_5$\\
1& 1&0& $a_6$ & $a_0 \oplus a_2 \oplus a_4 \oplus a_6$\\
1& 1&1& $a_7$ & $a_0 \oplus a_1 \oplus a_2 \oplus a_3 \oplus a_4 \oplus a_5 \oplus a_6 \oplus a_7$\\
\hline
\end{tabular}
\end{table}

\subsection{The Linear Distance Matrix}

Now, the method of obtaining the coefficients of the function which
outputs the distance of a Boolean function to a linear function, in
terms of ANF coefficients will be described. Essentially, the
distance between a Boolean function $f$ and a linear function $l_w$
is equivalent to $wt(f\oplus l_w)$. So, an investigation of how the
coefficients in the weight function change when the truth table of $f$ is
merged with a linear function is necessary. The weight function when
computed with new coefficients will be called the \emph{distance
function}, producing the output $d(f, l_w)$. This is a generalization of the weight function
defined as

\begin{equation}\label{anf2nl_eqn_weight2}
F_w(a_0,\dots,a_{2^n-1})=\sum\limits_{I\subseteq\{0,\dots,2^n-1\}}{\lambda_I^w
a^I}
\end{equation}
where $w\in \mathbb{F}_2^n$ specifies which linear function the
distance is measured, $\lambda_I^w\in \mathrm{Z}$ is the \textit{distance coefficient} of the product $a^I$. When
$w=\vectorize{0}$, (\ref{anf2nl_eqn_weight2}) is equivalent to
(\ref{anf2nl_eqn_weight}) and outputs the weight. The following
proposition states how the coefficients $\lambda_I^w$ are obtained.

\begin{proposition}
\label{prop_lambdaw}
Let $\lambda_I$ be the weight coefficient of $a^I$ in the weight function of an $n$-variable Boolean function $f$ and let $a_v$ be the representative ANF coefficient of $a^I$. The distance coefficient $\lambda_I^{w}$ of $a^I$ in the
distance function $F_w$ for a nonzero $w\in\mathbb{F}_2^n$ is
\begin{eqnarray*}
\label{eqn_newci}
 \lambda_I^{w} =
\left\{\begin{array}{lrc} 2^{n-1}, & \mbox{ if } I=\emptyset,\\
\phantom{-}\lambda_I, &\mbox{ if } I\neq\emptyset \mbox{ and } w \preceq v \mbox{ and } wt(w) \mbox{ is even},\\
-\lambda_I, &\mbox{ if } I\neq\emptyset \mbox{ and } w \preceq v \mbox{ and } wt(w) \mbox{ is odd},\\
\phantom{-\lambda}0,&\mbox{ otherwise}.
\end{array}
\right.
\end{eqnarray*}
\end{proposition}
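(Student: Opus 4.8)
The plan is to reduce the distance function directly to the weight function by examining, one truth-table entry at a time, what happens when $f$ is replaced by $f\oplus l_w$. Since $d(f,l_w)=wt(f\oplus l_w)$ is the integer sum of the entries of $f\oplus l_w$ after each entry's $\mathbb{F}_2$-addition has been converted via (\ref{eqn_poincare}), I would first record the elementary fact that complementation acts as $t\mapsto 1-t$ on a bit $t\in\{0,1\}$. Writing $[f]_x$ for the integer expansion of the entry $f(x)=\bigoplus_{u\preceq x}a_u$ in the variables $a_u$, at a point with $l_w(x)=0$ the entry of $f\oplus l_w$ equals $[f]_x$, while at a point with $l_w(x)=1$ it equals $1-[f]_x$. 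Summing over all $x$ gives
\[
d(f,l_w)=\bigl|\{x\mid l_w(x)=1\}\bigr|+\sum_{x}(-1)^{l_w(x)}[f]_x .
\]
For nonzero $w$ the linear function $l_w$ is balanced, so the first term is $2^{n-1}$; since $[f]_x$ carries no constant term, this already yields the $I=\emptyset$ coefficient $\lambda_\emptyset^{w}=2^{n-1}$.

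Next I would collect, for a fixed nonempty $I=\{i_1,\dots,i_k\}$, the coefficient of $a^I$ in the displayed expression. By (\ref{eqn_ttanf}) and (\ref{eqn_poincare}), the monomial $a^I$ arises in $[f]_x$ — carrying the constant factor $d_I=(-2)^{k-1}$ — exactly at those $x$ with $\vectorize{i_j}\preceq x$ for every $j$, equivalently (Proposition \ref{prop_union}) exactly when $v\preceq x$ for the representative vector $v$ with $supp(v)=\bigcup_j supp(\vectorize{i_j})$. Hence
\[
\lambda_I^{w}=d_I\sum_{x\succeq v}(-1)^{l_w(x)}
=d_I\Bigl(\bigl|\{x\succeq v\mid l_w(x)=0\}\bigr|-\bigl|\{x\succeq v\mid l_w(x)=1\}\bigr|\Bigr),
\]
so everything reduces to how $l_w$ is distributed over the super-vectors $\supvec{v}$.

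The heart of the argument — and the step I expect to be the main obstacle — is this distribution analysis, which I would settle combinatorially. If $w\not\preceq v$, choose $j\in supp(w)\setminus supp(v)$; complementing the $j$-th coordinate is an involution of $\supvec{v}$ (it preserves $x\succeq v$ because $j\notin supp(v)$) that flips the value of $l_w$ (because $j\in supp(w)$), so $l_w$ is balanced on $\supvec{v}$, the signed count vanishes, and $\lambda_I^{w}=0$. If instead $w\preceq v$, then every $x\succeq v$ has $x_i=1$ for all $i\in supp(w)\subseteq supp(v)$, whence $l_w(x)=\scalprod{w}{x}=\scalprod{w}{v}=wt(w)\bmod 2$ is constant on $\supvec{v}$, and the signed count is $(-1)^{wt(w)}|\supvec{v}|=(-1)^{wt(w)}2^{\,n-wt(v)}$ by (\ref{eqn_supvec}). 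Recalling from (\ref{eqn_lambda})–(\ref{eqn_ni}) that $\lambda_I=d_I\,n_I=d_I\,2^{\,n-wt(v)}$, the case $w\preceq v$ gives $\lambda_I^{w}=(-1)^{wt(w)}\lambda_I$ — that is $+\lambda_I$ when $wt(w)$ is even and $-\lambda_I$ when $wt(w)$ is odd — which together with $\lambda_I^{w}=0$ otherwise and $\lambda_\emptyset^{w}=2^{n-1}$ is exactly the claimed formula. The only delicate bookkeeping is keeping the identification of $I$ with its representative $v$ straight, so that all $\vectorize{i_j}\preceq x$ simultaneously iff $v\preceq x$, and tracing the sign carefully through $\scalprod{w}{v}=wt(w)\bmod 2$.
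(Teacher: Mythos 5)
Your proof is correct and follows essentially the same route as the paper's: the constant $2^{n-1}$ coming from the complemented half of the truth table, the reduction of $\lambda_I^w$ to how $l_w$ distributes over $\supvec{v}$ (equivalently, the signed count $d_I\sum_{x\succeq v}(-1)^{l_w(x)}$), and the same case split on $w\preceq v$ versus $w\npreceq v$ with the parity of $wt(w)$. Your single-coordinate-flip involution in the $w\npreceq v$ case is a crisper rendering of the paper's pairing of $y_1$ with $y_2$ via the parity of $|supp(y)\cap A|$ for $A=supp(w)\setminus supp(v)$, but it is the same underlying idea.
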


\begin{proof}
Adding a nonzero linear function $l_w$ to $f$ complements the truth
table of $f$ at $2^{n-1}$ points. At these points, the new value of
the function becomes $1\oplus f(x)$, which is equivalent to $1-f(x)$
in integer arithmetic. Considering the integer valued expression of
the truth table entries in terms of ANF coefficients, this
corresponds to negating the terms and producing a constant value of
$1$ that is independent of the ANF coefficients. This proves
$\lambda_I^w=2^{n-1}$ for $I=\emptyset$.

In order to find out the values of other coefficients, at how many
points $supp(l_w)$ coincides with the truth table entries the
product $a^I$ appears must be calculated. If all (resp. half, none)
of the points where $a^I$ appears in the truth table coincide with
$supp(l_w)$, then $\lambda_I^w$ will be $-\lambda_I$ (resp. $0$,
$\lambda_I$).

Linear function $l_w$ identified by the vector $w\in\mathbb{F}_2^n$
has $supp(w)$ terms and takes on the value $1$ whenever an odd sized
combination of its terms are added. Namely,
\begin{equation}
\label{eqn_lwsupport}
 supp(l_w)=\{x\in\mathbb{F}_2^n \mid \, \#\{supp(x)\cap supp(w)\} \mbox{ is odd.}\}.
\end{equation}
This also means that, if $x\in supp(l_w)$ then $supp(x)=I\cup J$
such that $I\subseteq supp(w)$ with $|I|\equiv 1\pmod{2}$ and
$J\subseteq \{1,\ldots, n\} \setminus supp(w)$, i.e., the components
which are not in the support of $w$ can be chosen freely since they
do not contribute to the output of $l_w$. On the other hand, if
$a_v$ is the representative ANF coefficient of the product $a^I$, the truth
table entries where $a^I$ appears is $\supvec{v}$ by
(\ref{eqn_ttanf}).
\begin{itemize}
\item Assume $w \preceq v$. For a vector $x\in\mathbb{F}_2^n$ to be both in $supp(l_w)$ and $\supvec{v}$, $supp(w)\subseteq supp(x)$ is necessary. Otherwise, if any component $j\in supp(w)$ of $x$ is taken to be zero, $x$ will not be in $\supvec{v}$,
because $j\in supp(w)$ and $w \preceq v$ implies $j\in supp(v)$, which means the $j^{th}$ component will always be 1 in $\supvec{v}$. Hence, intersection occurs at the
points $\supvec{w}$, i.e., all the terms in $l_w$ must be chosen. If
$wt(w)$ is even, the linear function $l_w$ gets the value zero at
these points and the intersection becomes the empty
set, proving $\lambda_I^w=\lambda_I$. Following the same argument,
if $wt(w)$ is odd and all the terms in $l_w$ are chosen, $l_w$ attains the value 1 at the points $\supvec{w}$.
Since $w \preceq v$ implies $\supvec{v}\subseteq\supvec{w}$, all the points the term $a_v$ appears in the truth table coincide with $supp(l_w)$ and the terms at these points will be negated due to complementation, leading to $\lambda_I^w=-\lambda_I$.

\item Assume $w \npreceq v$. This implies $A=supp(w) \setminus
supp(v)\neq\emptyset$. Let $x\in\mathbb{F}_2^n$ such that $supp(x)=supp(w)\cap supp(v)$. Then
it is easy to show that half of the vectors in $\supvec{x}$ have even weight and half of them have odd weight. Because for any
$y_1\in\supvec{x}$ with $|supp(y_1)\cap A| \equiv 1\pmod{2}$ a corresponding vector $y_2$ such that $|supp(y_2)\cap A|\equiv 0\pmod{2}$ can be found. A consequence of this is the output of the linear function $l_w$ at points $y_1$ and $y_2$ are complements of each other. This means that half of the vectors in $\supvec{v}$ are also in $supp(l_w)$ and half of them are not. Therefore, in the summation of truth table entries at positions $\supvec{v}$, terms cancel each other making $\lambda_I^w=0$.
\end{itemize}
\end{proof}

Now consider all $2^n$ functions $F_w$ which map the ANF
coefficients of a Boolean function $f$ to the distance $d(f, l_w)$,
with the attention being on the distance coefficients of $a^I$
with $|I|=1$, i.e., distance coefficients of $a_i$ for
$i\in\{0,\cdots,2^n-1\}$. According to (\ref{prop_union}), for any
product $a^I$ with $|I|>1$, a representative coefficient from this
set can be used. The $2^n\times 2^n$ matrix whose $i^{th}$ row
consists of such coefficients $\lambda_I^{\vectorize{i}}$ will be
called the \emph{Linear Distance Matrix (LDM)} of order $n$, denoted by
$M^n$. In view of \eqref{eqn_lambda} and Proposition \ref{prop_lambdaw}, each entry of LDM can be defined as follows:
\begin{equation}
\label{eqn_ldm2}
 M_{i,j}^n = \left\{\begin{array}{lrc}
(-1)^{wt(\vectorize{i})}2^{n-wt(\vectorize{j})}, &\mbox{ if } \vectorize{i} \preceq \vectorize{j},\\
0,&\mbox{ otherwise.}
\end{array}
\right.
\end{equation}

\begin{table}[!t]
\renewcommand{\arraystretch}{1.3}
\caption{Linear Distance Matrix for $n=3$.}
\label{tab_ldm3}
\centering
\begin{tabular}{l|rrrrrrrr}
\hline
&$a_0$ & $a_1$ & $a_2$ & $a_3$ & $a_4$ & $a_5$ & $a_6$ & $a_7$\\
\hline
$l_0$&8   &   4   &   4   &   2   &   4   &   2   &   2   &   1       \\
$l_1$&0   &   -4  &   0   &   -2  &   0   &   -2  &   0   &   -1      \\
$l_2$&0   &   0   &   -4  &   -2  &   0   &   0   &   -2  &   -1      \\
$l_3$&0   &   0   &   0   &   2   &   0   &   0   &   0   &   1       \\
$l_4$&0   &   0   &   0   &   0   &   -4  &   -2  &   -2  &   -1      \\
$l_5$&0   &   0   &   0   &   0   &   0   &   2   &   0   &   1       \\
$l_6$&0   &   0   &   0   &   0   &   0   &   0   &   2   &   1       \\
$l_7$&0   &   0   &   0   &   0   &   0   &   0   &   0   &   -1      \\
\hline
\end{tabular}
\end{table}
Table \ref{tab_ldm3} shows the LDM of order $3$. The entries of
$n^{th}$ order LDM are closely related to the Sylvester-Hadamard
matrix $H_n$, which is defined as
\begin{eqnarray}
\label{eqn_hadamard}
H_0&=&\left[1\right],\\
H_n&=&\begin{bmatrix}
1&\phantom{-}1\\
1&-1%
\end{bmatrix}\otimes  H_{n-1}, \mbox{ for } n\geq 1,
\end{eqnarray}
where $\otimes$ refers to the Kronecker product of matrices. Each
row (or column) of $H_n$ represents the truth table of a linear
function, whose entries are transformed from $(0,1)$ to $(1,-1)$. Table
\ref{tab_h3} shows $H_3$ where only the signs of the entries are
shown. '+' and '-' denote the points the function takes on the
values $0$ and $1$ respectively.
\begin{proposition}
\label{prop_ldmhadamard}
$M_{i,j}^n$ can be obtained by adding the entries of the $i^{th}$ row of
$H_n$ at columns $\supvec{\vectorize{j}}$:
\end{proposition}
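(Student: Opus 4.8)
The plan is to start from the explicit formula (\ref{eqn_ldm2}) for $M_{i,j}^n$ and show that the claimed row-sum of $H_n$ reproduces its two cases. First I would record the standard closed form for the entries of the Sylvester-Hadamard matrix generated by (\ref{eqn_hadamard}): an easy induction on the Kronecker product gives $(H_n)_{i,k}=(-1)^{\scalprod{\vectorize{i}}{\vectorize{k}}}$, which matches the paper's reading of row $i$ as the $(1,-1)$-valued truth table of $l_{\vectorize{i}}$. The quantity to evaluate is therefore
\[
S=\sum_{\vectorize{k}\in\supvec{\vectorize{j}}}(-1)^{\scalprod{\vectorize{i}}{\vectorize{k}}}.
\]

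Next I would parametrize the super-vectors cleanly. Every $\vectorize{k}\succeq\vectorize{j}$ can be written uniquely as $\vectorize{k}=\vectorize{j}\oplus\vectorize{t}$ with $\vectorize{t}\preceq\neg\vectorize{j}$, and this is precisely the bijection behind the counts (\ref{eqn_supvec}) and (\ref{eqn_subvec}), since $wt(\neg\vectorize{j})=n-wt(\vectorize{j})$. By $\mathbb{F}_2$-bilinearity of the inner product together with the disjointness of $supp(\vectorize{j})$ and $supp(\vectorize{t})$, one has $\scalprod{\vectorize{i}}{\vectorize{k}}=\scalprod{\vectorize{i}}{\vectorize{j}}\oplus\scalprod{\vectorize{i}}{\vectorize{t}}$, so
\[
S=(-1)^{\scalprod{\vectorize{i}}{\vectorize{j}}}\sum_{\vectorize{t}\preceq\neg\vectorize{j}}(-1)^{\scalprod{\vectorize{i}}{\vectorize{t}}}.
\]

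The key step is to factor this character sum over the coordinates that are free to vary, namely the positions outside $supp(\vectorize{j})$: it equals $\prod_{t\notin supp(\vectorize{j})}\bigl(1+(-1)^{(\vectorize{i})_t}\bigr)$, each factor being $2$ if $(\vectorize{i})_t=0$ and $0$ if $(\vectorize{i})_t=1$. Hence the sum vanishes unless $supp(\vectorize{i})\subseteq supp(\vectorize{j})$, i.e. unless $\vectorize{i}\preceq\vectorize{j}$, in which case it equals $2^{n-wt(\vectorize{j})}$. In that same case $supp(\vectorize{i})\subseteq supp(\vectorize{j})$ forces $\scalprod{\vectorize{i}}{\vectorize{j}}\equiv wt(\vectorize{i})\pmod 2$, contributing the leading sign $(-1)^{wt(\vectorize{i})}$. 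Assembling the two cases reproduces (\ref{eqn_ldm2}) exactly. I do not anticipate a genuine obstacle here: the computation is a short character-sum argument, and the only points that demand care are pinning down the Hadamard indexing so that $(H_n)_{i,k}=(-1)^{\scalprod{\vectorize{i}}{\vectorize{k}}}$, and checking that the disjointness of the supports lets the inner product split so the sum factorizes coordinatewise.
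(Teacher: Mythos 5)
Your proof is correct, but it takes a genuinely different route from the paper's. The paper's argument is semantic and only a few lines long: since the $i^{th}$ row of $H_n$ is the $(1,-1)$-valued truth table of $l_{\vectorize{i}}$, and by (\ref{eqn_ttanf}) a representative coefficient $a_v$ appears exactly in the truth table entries indexed by $\supvec{v}$, summing that row over columns $\supvec{v}$ counts how many times $a_v$ enters the integer-valued truth table expression with a positive sign minus how many times with a negative sign once $l_{\vectorize{i}}$ is added --- which is precisely the distance coefficient from Proposition \ref{prop_lambdaw}. You instead bypass the distance interpretation entirely: you take (\ref{eqn_ldm2}) as the definition of $M_{i,j}^n$, identify $(H_n)_{i,k}=(-1)^{\scalprod{\vectorize{i}}{\vectorize{k}}}$ (correct for the paper's most-significant-bit-first indexing, as the Kronecker recursion confirms by induction), and evaluate the character sum over $\supvec{\vectorize{j}}$ via the bijection $\vectorize{k}=\vectorize{j}\oplus\vectorize{t}$ with $\vectorize{t}\preceq\neg\vectorize{j}$ and the coordinatewise factorization into factors $1+(-1)^{(\vectorize{i})_t}$. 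Every step checks out: bilinearity alone splits the exponent (the support disjointness is what makes the parametrization a bijection, not what splits the inner product, as you half-suggest), the product vanishes unless $\vectorize{i}\preceq\vectorize{j}$, and in that case $\scalprod{\vectorize{i}}{\vectorize{j}}\equiv wt(\vectorize{i})\pmod 2$ supplies the sign, reproducing both cases of (\ref{eqn_ldm2}). What your approach buys is a self-contained computational verification that never leans on Proposition \ref{prop_lambdaw}; since the paper derives (\ref{eqn_ldm2}) from that proposition and then argues the Hadamard identity semantically, your character-sum calculation doubles as an independent consistency check of the whole chain. What the paper's approach buys is brevity and the structural picture --- the LDM as partial row sums of $H_n$ --- which is the viewpoint it reuses immediately afterwards to explain the recursive Kronecker structure of $M^n$.
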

\begin{proof}
Since the $i^{th}$ row of $H_n$ represents the truth table of
$l_{\vectorize{i}}$, the distance coefficient of $a_v$ in the distance
function can be calculated by adding the '+' and '-' values of $H_n$
in the $i^{th}$ row at columns $\supvec{v}$. This gives how
many times the sign of $a_v$ will be positive and negative in the integer valued expression of truth table entries when the
linear function $l_{\vectorize{i}}$ is added to a Boolean function. This sum corresponds to the distance coefficient of $a_v$.
\end{proof}

LDM can also be expressed with the following recursive structure:
\begin{eqnarray}
\label{eqn_ldm}
M^{n,0}&=&\left[1\right],\\
M^{n,i}&=&\begin{bmatrix}
2&\phantom{-}1\\
0&-1%
\end{bmatrix}\otimes  M^{n,i-1}, \mbox{ for } 1\leq i \leq n,\\
M^{n,n}&=&M^n.
\end{eqnarray}
This recursive structure can be explained with Sylvester-Hadamard
matrices. As stated in Proposition \ref{prop_ldmhadamard}, the entries of $M^n$ correspond to the sum of particular entries of $H_n$. When the dimension is extended from $n$ to $n+1$,
$H_n$ grows according to (\ref{eqn_hadamard}). As a result of
the way $H_n$ is duplicated to produce $H_{n+1}$, the values of
$M^n$ are doubled for ANF coefficients that do not contain the newly
introduced variable, which corresponds to the upper left quarter of
$M^{n+1}$. The upper right quarter of $M^{n+1}$ will be equal to
$M_n$ as this part of $H_{n+1}$ is equal to $H_n$. The lower right
quarter will be $-M^n$ since the entries of $H_{n+1}$ at this part have opposite signs with $H_n$, and the lower left quarter will be the
matrix consisting of all zeros because for each $\supvec{v}$, half of the entries will be
positive and the other half will be negative.

\begin{table}[!t]
\renewcommand{\arraystretch}{1.3}
\caption{$H_3$: Sylvester-Hadamard Matrix of order three.}
\label{tab_h3} \centering
\begin{tabular}{l|c c c c| c c c c}
\hline & $a_0$ & $a_1$ &$a_2$ &$a_3$ &$a_4$ &$a_5$ &$a_6$ &$a_7$\\
\hline
$l_0$&+&+&+&+&+&+&+&+\\
$l_1$&+&-&+&-&+&-&+&-\\
$l_2$&+&+&-&-&+&+&-&-\\
$l_3$&+&-&-&+&+&-&-&+\\ \hline
$l_4$&+&+&+&+&-&-&-&-\\
$l_5$&+&-&+&-&-&+&-&+\\
$l_6$&+&+&-&-&-&-&+&+\\
$l_7$&+&-&-&+&-&+&+&-\\
\hline
\end{tabular}
\end{table}

The first row of the LDM contains the distance coefficients for calculating the distance
to the linear function $l_0=0$, which also corresponds to the
weight. The coefficients in this row are also equivalent to the weight coefficients of the weight function. The second and third rows contain the distance coefficients for calculating the
distances to the linear functions $l_1=x_n$ and $l_2=x_{n-1}$ respectively, and so on. The first row of the LDM will often be an exceptional case for the rest of the
discussions in this chapter because once the weight of the function is
calculated in the first step of the nonlinearity computation
algorithm which is going to be explained in the next section, this row will no longer be needed. Some properties of the LDM derived from (\ref{eqn_ldm2}) are as follows:
\begin{remark}
\label{rmk_ldm1} Entries in the $j^{th}$ column take on the values
from the set $\{0, \pm 2^k\}$, where $k=n-wt(\vectorize{j})$.
\end{remark}

Except the first and the last columns of the LDM, all three values
mentioned in Remark \ref{rmk_ldm1} appear in a column. In the first
column, there is only one nonzero entry which is positive and all
the other entries are zero. In the last column, there is no entry
with a zero value.

\begin{remark}
\label{rmk_ldm4} Nonzero entries of the $j^{th}$ column are at
positions $x$ where $\vectorize{x}=\subvec{\vectorize{j}}$.
\end{remark}

\begin{remark}
\label{rmk_ldm2_0} Nonzero entries of the $i^{th}$ row are at
positions $x$ where $\vectorize{x}=\supvec{\vectorize{i}}$.
\end{remark}

\begin{remark}
\label{rmk_ldm2} Nonzero entries of the $i^{th}$ row are
positive if $wt(\vectorize{i})$ is even, and negative if $wt(\vectorize{i})$ is odd.
\end{remark}

\begin{remark}
\label{rmk_ldm3} Let $j_1$ and $j_2$ be two column indices in the LDM.
Then the following holds:
\begin{itemize}
\item If $\vectorize{j_1} \preceq \vectorize{j_2}$ then $M_{i,j_1} \neq 0$
implies $M_{i,j_2} \neq 0$.
\item If $supp(\vectorize{j_1})\cap supp(\vectorize{j_2})=\emptyset$ then at least one of
$M_{i,j_1}$ and $M_{i,j_2}$ is zero, except for $i=0$.
\end{itemize}
\end{remark}

\begin{remark}
\label{rmk_ldm5} Zero entries of the $j^{th}$ column are at positions $x$
where $supp(\vectorize{x})\cap (\{1,\cdots,n\}\setminus
supp(\vectorize{j}))\neq \emptyset$.
\end{remark}

\begin{proposition}
\label{prop_ldmpositive} If the $i^{th}$ row of the LDM is used to
measure the distance to the linear function $l_{\vectorize{i}}$,
negating each entry of the $i^{th}$ row is used to measure the
distance to the affine function $l_{\vectorize{i}}^{'}$.
\end{proposition}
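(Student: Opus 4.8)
The plan is to deduce the statement from the Sylvester--Hadamard description of the LDM in Proposition~\ref{prop_ldmhadamard}, reducing the whole claim to a single entrywise sign flip. The target relation is elementary: since $l_{\vectorize{i}}^{'}=l_{\vectorize{i}}\oplus 1$ complements the truth table of $l_{\vectorize{i}}$ everywhere, for any $n$-variable Boolean function $f$ one has
\[
d(f,l_{\vectorize{i}}^{'})=wt(f\oplus l_{\vectorize{i}}\oplus 1)=2^n-d(f,l_{\vectorize{i}}).
\]
So I must show that replacing each row entry $M_{i,j}^n$ by $-M_{i,j}^n$ is exactly the coefficient-level manifestation of this complementation.

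For the mechanism I would argue on the signs. By Proposition~\ref{prop_ldmhadamard}, $M_{i,j}^n$ is the sum of the entries of the $i^{th}$ row of $H_n$ over the columns $\supvec{\vectorize{j}}$, and that row is the $\{+1,-1\}$-valued truth table of $l_{\vectorize{i}}$. The signed truth table of $l_{\vectorize{i}}^{'}$ is the entrywise negation of this row, because each point where $l_{\vectorize{i}}$ takes the value $0$ (a $+$) is a point where $l_{\vectorize{i}}^{'}$ takes the value $1$ (a $-$), and conversely. Summing the negated row over the same columns $\supvec{\vectorize{j}}$ thus returns $-M_{i,j}^n$ for every $j$, which are by construction the degree-$1$ distance coefficients associated with $l_{\vectorize{i}}^{'}$.

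It then remains to check that this sign flip is consistent for the product terms and for the constant, which is where the only real bookkeeping lies. For a product $a^I$ with representative coefficient $a_v$, Proposition~\ref{prop_lambdaw} together with \eqref{eqn_lambda} gives $\lambda_I^{\vectorize{i}}=(-2)^{|I|-1}M_{i,v}^n$, so negating every matrix entry negates the distance coefficient of each nonempty $I$ simultaneously, matching the negation of the variable part of $d(f,l_{\vectorize{i}})$. The constant term of the distance function is not stored in the matrix; it equals $wt(l_{\vectorize{i}})$, which for $\vectorize{i}\neq\vectorize{0}$ is $2^{n-1}$, and since $wt(l_{\vectorize{i}}^{'})=2^n-2^{n-1}=2^{n-1}$ as well, this constant is invariant under complementation and needs no change. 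Hence for every nonzero row, negating the entries alone yields $d(f,l_{\vectorize{i}}^{'})$. The main obstacle is precisely this constant: the first row ($\vectorize{i}=\vectorize{0}$, the weight) is the exceptional case noted earlier, where $wt(l_{\vectorize{0}}^{'})=2^n\neq wt(l_{\vectorize{0}})$, so there the pure sign flip must be accompanied by the additive offset $2^n$.
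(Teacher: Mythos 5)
Your proof is correct, and its decisive step coincides with the paper's own: the paper simply writes $d(f,l_{\vectorize{i}})=2^{n-1}+\alpha$ for a nonzero linear function, observes that negating the row entries turns the sum into $2^{n-1}-\alpha$, and concludes via $d(f,l_{\vectorize{i}}\oplus 1)=2^n-d(f,l_{\vectorize{i}})$ --- exactly the relation you state at the outset. Where you differ is in the surrounding justification. You derive the entrywise negation semantically from Proposition \ref{prop_ldmhadamard}: the $i^{th}$ row of $H_n$ is the $(\pm 1)$-valued truth table of $l_{\vectorize{i}}$, complementation negates it entrywise, and summing the negated row over columns $\supvec{\vectorize{j}}$ yields $-M_{i,j}^n$; note this implicitly extends the counting argument of Proposition \ref{prop_ldmhadamard} from linear to affine functions, which is legitimate because that argument uses only the support of the added function, never its linearity. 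You also verify explicitly that the product terms follow along, via the (correct) identity $\lambda_I^{\vectorize{i}}=(-2)^{|I|-1}M_{i,v}^n$, and that the constant $2^{n-1}=wt(l_{\vectorize{i}})=wt(l_{\vectorize{i}}^{'})$ is invariant under complementation --- points the paper leaves implicit in the phrase ``the sum of the distance coefficients except the constant coefficient.'' Finally, your remark about the first row is a genuine refinement: the proposition as stated does not exclude $\vectorize{i}=\vectorize{0}$, but the paper's proof quietly restricts to nonzero linear functions; as you observe, for the weight row the constant is $wt(l_{\vectorize{0}})=0\neq 2^{n-1}$, the pure sign flip returns $-wt(f)$ rather than $2^n-wt(f)$, and the additive offset $2^n$ is required. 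What the paper's route buys is brevity; what yours buys is an explanation of why the negation works at the level of individual coefficients, including products, together with a precise delimitation of the exceptional row.
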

\begin{proof}
Let the distance of a Boolean function to a nonzero linear function be
expressed as
\[
d(f,l_i) = 2^{n-1}+\alpha
\]

where $\alpha$ is the sum of the distance coefficients except the constant coefficient. The value of $\alpha$ also corresponds to the sum of certain entries of the $i^{th}$ row of the LDM,
with each entry being multiplied with a constant depending on the Boolean function $f$, which will be explained in the next subsection. Regardless of this multiplication,
if the distance coefficients in the $i^{th}$ row of the LDM are negated, the new sum becomes
$2^{n-1}-\alpha$, and this is equivalent to $d(f,l_i\oplus 1)$, since $d(f,l_i\oplus 1)= 2^n - d(f,l_i)$.

\end{proof}

In view of Proposition \ref{prop_ldmpositive}, all entries of the LDM
can be considered as absolute values. This results in computing the
distance to the linear function $l_w$ if $wt(w)$ is even and to the
affine function $l_w^{'}$ if $wt(w)$ is odd.

\subsection{Combining Coefficients}
A Boolean function consisting of $p$ monomials has $2^p$ distance
coefficients associated with the nonzero products of ANF coefficients, which can be computed according to (\ref{eqn_lambda}). Computing the weight of the function requires all these coefficients to be added whereas the distance to a particular linear function can
be obtained by adding a subset of these coefficients plus a constant value of $2^{n-1}$. Computing the distance coefficients by processing all combinations of $p$
monomials has a computational complexity of $\mathcal{O}(2^p)$
operations, and this can be done at most for $p < 40$ in practice.
Now, a new method will be introduced which combines the related distance
coefficients, with the aim of reducing the number coefficients and avoiding the processing of  $2^p$ monomial combinations.

Let $a^{I}$ and $a^{J}$ be two terms in the distance function such
that $\bigcup\limits_{i\in I}^{}{supp(\vectorize{i})}=\bigcup\limits_{j\in
J}^{}{supp(\vectorize{j})}$, that is, the input variables appearing
in ANF coefficients of $a^I$ are the same as input variables
appearing in ANF coefficients of $a^J$. This not only makes
$n_I=n_J$ according to (\ref{eqn_ni}), but also specifies that these
terms appear in exactly the same truth table entries according to Proposition \ref{prop_union}. 
Hence, in the distance function \eqref{anf2nl_eqn_weight2}, for all values of $w$,
the distance coefficients of $a^I$ and $a^J$ will behave the same with respect to Proposition \ref{prop_lambdaw}. The distance coefficients of these terms can be collected under the distance coefficient of the term $a^K=a_k$ such that $a_k$ is the representative ANF coefficient of both $a^I$ and $a^J$. Since $n_I=n_J=n_K$, it is  $|I|$ and $|J|$ that distinguishes the distance coefficients $\lambda_I$
and $\lambda_J$ from $\lambda_K$. From (\ref{eqn_di}), it follows
that $\lambda_I=(-2)^{|I|-1}\lambda_K$ and
$\lambda_J=(-2)^{|J|-1}\lambda_K$. The distance coefficients of the terms which have the same representative ANF coefficients can be collected under a single distance coefficient, called the \textit{representative distance coefficient}, that is, the distance coefficient of the term belonging to the representative ANF coefficient. When this is done, the number of times a representative distance coefficient should be added will be called the \textit{combined coefficient}, and when multiplied with the value of the representative distance coefficient, will be called the \textit{combined distance coefficient}.
The combined coefficients of the distance function can be computed from the ANF coefficients of a Boolean function as follows:
\begin{equation}
\label{eqn_nnfcombined} C_{u}=\sum\limits_{\vectorize{u}=\vectorize{u_1}\vee\cdots\vee
\vectorize{u_k}}{(-2)^{k-1}\prod\limits_{1\leq i \leq k}{a_{u_i}}}.
\end{equation}

\begin{example}
Let the support of the ANF coefficients for a Boolean function be $\{a_1,a_2,a_3\}$. Then the distance coefficients corresponding to the nonzero products 
\[
\{1,a_1,a_2,a_3,a_1a_2,a_1a_3,a_2a_3,a_1a_2a_3\}
\]
 are
\[
\{\lambda, \lambda_1, \lambda_2,\lambda_3,\lambda_{1,2},\lambda_{1,3},\lambda_{2,3},\lambda_{1,2,3}\}
\]
where $\lambda$ is the constant term, $\lambda_1$ is the distance coefficient of $a_1$, $\lambda_{1,2}$
is the distance coefficient of $a_1a_2$, and so on. Since $\lambda_{1,2}=\lambda_{1,3}=\lambda_{2,3}=-2\lambda_3$ and $\lambda_{1,2,3}=4\lambda_3$ from (\ref{eqn_di}) and
(\ref{eqn_ni}), these distance coefficients can be combined under the representative distance coefficient $\lambda_{3}$, producing the combined distance coefficients $\{\lambda, \lambda_1,
\lambda_2, -\lambda_3\}$. Here, the corresponding combined coefficients are $\{1, 1, 1, -1\}$.

\end{example}

Algorithm \ref{alg_distancecoef} calculates the combined
coefficients from the ANF by processing each
monomial one by one. The input to the algorithm is a list
of monomials of the Boolean function called \texttt{MonList}. The monomials are represented by $x^I=\prod\limits_{i\in I}{x_i}$ where $I\subseteq\{1,\cdots,n\}$.
The output of the algorithm is a list of combined coefficients called \texttt{CoefList} consisting of
elements of the form $C_ix^I$ where $C_i\in\mathrm{Z}$ is the combined coefficient of the monomial $x^I$. When a new monomial $x^I$ is added to the function, the product of each existing combined coefficients and the new monomial is processed and the newly produced coefficients are added to a temporary list named \texttt{NewList}. For an existing coefficient $C_jx^J$, if it is the case that $I\subseteq J$, then the product of
these two coefficients will be $-2C_jx^J$, when added will negate the original coefficient. The products of terms whose $n_I$ part as specified in (\ref{eqn_ni}) will reside in the new monomial $x^I$ are collected under the variable \texttt{S}. These
terms will be added to the combined coefficients at the end of processing that monomial. The last case is the general case
where the coefficient of the term produced by the product of two monomials are added to the \texttt{NewList}. At the end of processing
the combined coefficients of the previous step, all the newly produced coefficients are added to \texttt{CoefList}. Addition of items to lists is denoted
by + operator. When an entry $C_ix^I$ is to be added to a coefficient list, if there already exists an element $C_jx^I$, then
the coefficient $C_ix^I$ is updated as $(C_i+C_j)x^I$.

\begin{algorithm}
\caption{Calculate combined coefficients}\label{alg_distancecoef}
\begin{algorithmic}[1]
\Procedure{CalcCoef}{$MonList$} 

\State $CoefList\gets\emptyset$

\ForAll{$x^I\in MonList$}
\State $NewList \gets \emptyset$
\State $S \gets 1$

\ForAll{$C_jx^J \in CoefList$}
\If{$I\subseteq J$}
\State $C_j \gets -C_j$
\ElsIf{$J\subset I$}
\State $S \gets S -2C_j$
\Else
\State $NewList \gets NewList + (-2C_jx^{I\cup J})$
\EndIf
\EndFor

\

\ForAll{$C_kx^K\in NewList$}
\State $CoefList\gets CoefList + C_kx^K$
\EndFor

\

\State $CoefList\gets CoefList +Sx^I$
 \EndFor

\

\State \textbf{return} $CoefList$
\EndProcedure
\end{algorithmic}
\end{algorithm}

The sum of all combined distance coefficients will give the weight
of the function. In order to compute the distance to a nonzero
linear function $l_w$, only a subset of these coefficients need to
be added, which is determined according to whether the vector associated with
the combined distance coefficient is contained in $\supvec{w}$.
Also, since $\lambda_I^w=2^{n-1}$ for $w\neq\vectorize{0}$ and $I=\emptyset$, a constant
value of $2^{n-1}$ should be added.

Since the distance to linear functions is related to the Walsh
coefficients of a Boolean function, the sum of the combined distance
coefficients can be expressed in terms of Walsh coefficients. For
$w\neq\vectorize{0}$, i.e., for a nonzero linear function $l_w$,
\begin{eqnarray}
\label{eqn_ldm_walsh} W_f(w)=2^n-2d(f,l_w),\\
W_f(w) = 2^n - 2(2^{n-1}+\alpha_w),\\
\label{eqn_alphaw}\alpha_w=-\frac{W_f(w)}{2}
\end{eqnarray}
where $\alpha_w$ is the sum of the combined distance coefficients, excluding the constant coefficient $2^{n-1}$. From \eqref{eqn_alphaw}, it can be seen that if the constant term $2^{n-1}$ is not added, the distance function \eqref{anf2nl_eqn_weight2} outputs $-\frac{W_f(w)}{2}$. Because the nonlinearity of a Boolean function depends on the maximum absolute value of the Walsh spectrum, being able to compute the maximum absolute value of the sum of the distance coefficients without adding the constant term is also sufficient to find out the nonlinearity.

\section{Computing Nonlinearity}

The nonlinearity computation algorithm consists of two phases. In the first phase, combined distance coefficients are calculated according to Algorithm \ref{alg_distancecoef} from the given ANF coefficients. Once this phase is completed, the distance
to any linear function can be obtained by adding a subset of these coefficients,
i.e., by adding the coefficients corresponding to columns $\supvec{w}$ in the $w^{th}$ row if the distance to the linear function $l_w$ is to be calculated. The nonlinearity computation on the other hand, requires all the distances to the linear functions to be calculated and the one with the minimum value being identified, which cannot be done in practice if $n$ is too high.
After the combined distance coefficients for a Boolean function are calculated, the distance to any linear function can be represented of the form

\begin{equation}
\label{eqn_bip}
F(b_1,\cdots,b_k)=\sum\limits_{1\leq i \leq k}^{}{\beta_i b_i}
\end{equation}
where $b_i\in\mathbb{F}_2$ and $\beta_i\in\mathrm{Z}$ is the combined distance coefficient associated with $b_i$.
Each $b_i$ in this function determines whether a zero or a nonzero entry is chosen from the corresponding column of the LDM.
Although there are $2^k$ possible inputs to this function, only some of the $k$-bit inputs actually correspond to a distance to a linear function. By enumerating all such $k$-bit inputs, one obtains the set of all distinct distances. The task of computing the nonlinearity then corresponds to finding the minimum of these values. Note, however, as explained in the previous section, omitting the addition of the constant coefficient when calculating the distance to the nonzero linear functions, one gets the negative half value of the Walsh coefficient, and this constant coefficient is assumed to be excluded in \eqref{eqn_bip}. With this slight modification, it becomes the maximum absolute value of \eqref{eqn_bip} to be found, instead of the minimum and maximum values, had the constant coefficient been added.

In the second phase of the nonlinearity computation algorithm, the maximum absolute value of the distance function \eqref{eqn_bip} is searched, which determines the nonlinearity. This problem also corresponds to a binary integer programming problem. The set of $2^k$ possible $k$-bit input vectors will be classified as feasible or infeasible according to whether they represent a distance to a linear function or not. The feasibility checking of inputs can be performed with Algorithm \ref{alg_branch1} and Algorithm \ref{alg_branch0} which determine whether a particular zero/nonzero choice of values in certain columns is possible in any of the rows of the LDM. By enumerating all possible distances to the set of linear functions, the minimum of these can be taken as the nonlinearity of the Boolean function.

A common approach in solving integer programming problems is to utilize the tree structure. The set of feasible inputs to the function (\ref{eqn_bip}) can be shown in a tree with the input variables $b_i$ being the nodes. This tree will be called the \emph{distance tree}. Starting with $b_1$ as the root node, each left (resp. right)
child node of a node $b_i$ represents the case where $b_i=1$ (resp. $b_i=0$). In order to enumerate feasible inputs, it is sufficient to check whether a node $b_i$ can take on 
the value 0 or 1 depending on the values of the parent nodes (values of the preceding variables). This can be done efficiently by using the facts mentioned in
Remarks (\ref{rmk_ldm4}), (\ref{rmk_ldm3}) and (\ref{rmk_ldm5}). Row indices $r$ in the LDM containing a nonzero entry in the $j^{th}$ column satisfy $\vectorize{r}\preceq\vectorize{j}$. Similarly, if a row has a zero value in the $j^{th}$ column then $supp(\vectorize{r}) \cap (\{1,\cdots,n\}\setminus supp(\vectorize{j})) \neq \emptyset$ must be satisfied
since a zero value in the $j^{th}$ column appears only in the row indices where at least one bit is set that is not in $supp(\vectorize{j})$.
Given two lists of columns $C_0$ and $C_1$, and another column identified by the index $u$,  Algorithm \ref{alg_branch1} determines whether there exists a row in the LDM containing a nonzero value in column $u$, with the condition that the
values in columns $C_0$ are zero and the values in columns $C_1$ are nonzero. Algorithm \ref{alg_branch0} performs the same task by checking
whether there is a row having a zero value in column $u$, under the same conditions. These two algorithms allow one to enumerate all feasible inputs to (\ref{eqn_bip}) by using the
associated column indices for each input variable $b_i$.

\begin{algorithm}
\caption{Decide whether a node in the distance tree can take on the value 1.}\label{alg_branch1}
\begin{algorithmic}[1]
\Procedure{Branch1}{$C_0,C_1,u$} 
\State $IncludeMask=\bigwedge\limits_{a\in C_1}{\vectorize{a}}$
\If{$(u\wedge IncludeMask)=\vectorize{0}$}
\State \textbf{return} false
\Else
\For{$i\in C_0$}
\If{$(\neg\vectorize{i} \wedge u \wedge IncludeMask)=\vectorize{0}$}
\State \textbf{return} false
\EndIf
\EndFor
\EndIf
\State \textbf{return} true
\EndProcedure
\end{algorithmic}
\end{algorithm}

\begin{algorithm}
\caption{Decide whether a node in the distance tree can take on the value 0.}\label{alg_branch0}
\begin{algorithmic}[1]
\Procedure{Branch0}{$C_0,C_1,u$} 
\State $IncludeMask=\bigwedge\limits_{a\in C_1}{\vectorize{a}}$
\If{$(\neg u\wedge IncludeMask)=\vectorize{0}$}
\State \textbf{return} false
\Else
\For{$i\in C_0$}
\If{$(\neg \vectorize{i} \wedge IncludeMask)=\vectorize{0}$}
\State \textbf{return} false
\EndIf
\EndFor
\EndIf
\State \textbf{return} true
\EndProcedure
\end{algorithmic}
\end{algorithm}

\textbf{Example.} Let $f(x_1,\cdots,x_5)=x_1x_5\oplus x_4x_5\oplus x_1x_2x_3\oplus x_1x_2x_4\oplus x_1x_2x_3x_4x_5$. Using Algorithm \ref{alg_distancecoef}, the following set of combined distance coefficients are obtained:
\[\{\lambda_3,\lambda_{17},\lambda_{26},\lambda_{28},-2\lambda_{19},-2\lambda_{29},-2\lambda_{30},3\lambda_{31}\}.\]
The associated distance function formed by these coefficients is
\begin{equation*}
F(b_1,\cdots,b_8)=8b_1+8b_2+4b_3+4b_4-8b_5-4b_6-4b_7+3b_8.
\end{equation*}
Note that in the LDM of order 5, $b_1$ is associated with column $a_3$, $b_2$ is associated with column $a_{17}$, and so on. Among the $2^8=256$ possible inputs to $F$, only $13$ of them are found to be feasible as shown in the distance tree in Figure \ref{fig_tree}. In the figure, the values in the leaf nodes shown in boxes below denote the output of $F$ when that particular combination of $b_i$'s are chosen. Each leaf node also corresponds to a feasible input of $F$ and is called a \textit{path}, denoted with a sequence of input bits. The value of a path is the output of $F$ for that input. A path in the tree identifies the linear functions whose distances to the Boolean function in consideration are obtained by adding the same combined distance coefficients, also specifying the distance to these functions. As there can be more than one linear function covered by a path, different paths can have the same value. For example, there are six paths with value 3, four paths with value -1 and two paths with value -5 in the distance tree of the example function. Table \ref{tab_tree} gives a more detailed information about this distance tree. The first column of the table denotes the leaf node (or path) number, the second column lists the path string, the third column gives the output of the distance function $F$ for the corresponding path and the last column denotes which linear functions that path is associated with. 

\begin{figure*}[!t]
\centering
\normalsize
\includegraphics[scale=0.75]{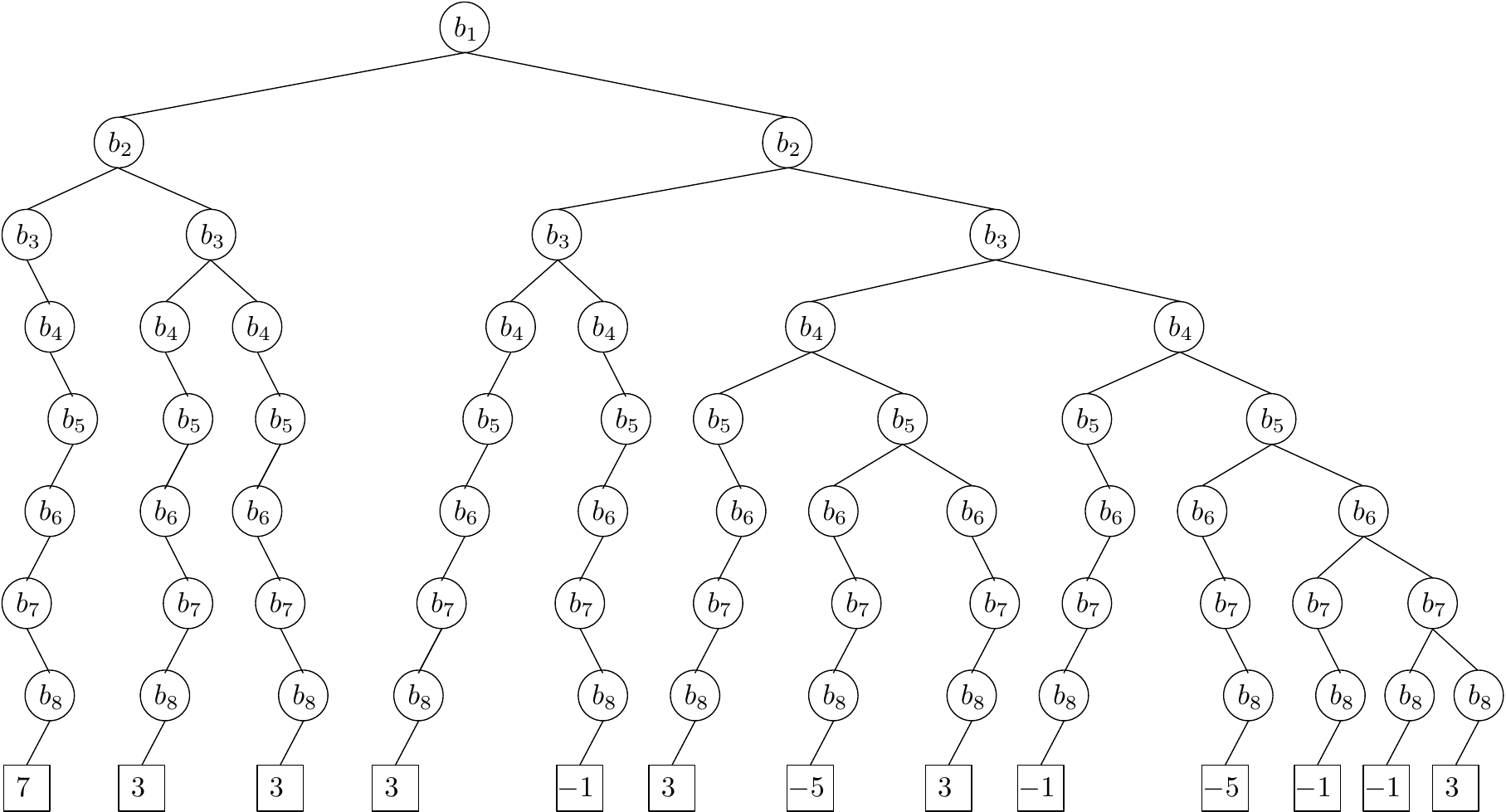}
\caption{Distance tree of $F(b_1,\cdots,b_8)=8b_1+8b_2+4b_3+4b_4-8b_5-4b_6-4b_7+3b_8$.}
\label{fig_tree}
\end{figure*}

\begin{table}[!t]
\renewcommand{\arraystretch}{1.35}
\caption{Distance tree data of $F(b_1,\cdots,b_8)=8b_1+8b_2+4b_3+4b_4-8b_5-4b_6-4b_7+3b_8$.}
\label{tab_tree} \centering
\begin{tabular}{c |c |c| r}
\hline N & Path & F & Associated $l_w$\\ \hline
1 & 11001101 & 7 & $x_5$\\\hline
2 & 10101011 & 3 & $x_4$\\\hline
3 & 10001001 & 3 & $x_4\oplus x_5$\\\hline
4 & 01111111 & 3 & $x_1$\\\hline
5 & 01001101 & -1& $x_1\oplus x_5$\\\hline
6 & 00110111 & 3 & $x_2$\\
  &          &   & $x_1\oplus x_2$\\\hline
7 & 00101011 & -5 & $x_1\oplus x_4$\\\hline
8 & 00100011 & 3 & $x_2\oplus x_4$\\
  &          &   & $x_1\oplus x_2\oplus x_4$\\\hline
9 & 00010111 & -1 & $x_3$\\
  &          &    & $x_1\oplus x_3$\\
  &          &    & $x_2\oplus x_3$\\
  &          &    & $x_1\oplus x_2\oplus x_3$\\\hline
10& 00001001 & -5 & $x_1\oplus x_4\oplus x_5$\\\hline
11& 00000101 & -1 & $x_2\oplus x_5$\\
  &          &    & $x_1\oplus x_2\oplus x_5$\\
  &          &    & $x_3\oplus x_5$\\
  &          &    & $x_1\oplus x_3\oplus x_5$\\
  &          &    & $x_2\oplus x_3\oplus x_5$\\
  &          &    & $x_1\oplus x_2\oplus x_3\oplus x_5$\\\hline
12& 00000011 & -1 & $x_3\oplus x_4$\\
  &          &    & $x_1\oplus x_3\oplus x_4$\\
  &          &    & $x_2\oplus x_3\oplus x_4$\\
  &          &    & $x_1\oplus x_2\oplus x_3\oplus x_4$\\\hline
13& 00000001 & 3  & $x_2\oplus x_4\oplus x_5$\\
  &          &    & $x_1\oplus x_2\oplus x_4\oplus x_5$\\
  &          &    & $x_3\oplus x_4\oplus x_5$\\
  &          &    & $x_1\oplus x_3\oplus x_4\oplus x_5$\\
  &          &    & $x_2\oplus x_3\oplus x_4\oplus x_5$\\
  &          &    & $x_1\oplus x_2\oplus x_3\oplus x_4\oplus x_5$\\
\hline
\end{tabular}
\end{table}

\begin{table}[!t]
\renewcommand{\arraystretch}{1.3}
\caption{Distance coefficients and related portion of the LDM for $f(x_1,\cdots,x_5)=x_1x_5\oplus x_4x_5\oplus x_1x_2x_3\oplus x_1x_2x_4\oplus x_1x_2x_3x_4x_5$.} \label{tab_example}
\centering
\begin{scriptsize}
\begin{tabular}{l|r|r|r|r|r|r|r|r||r|r|r}
\hline $C_i$& 1&1&1&1&-2&-2&-2&3&F&W& N\\ \hline 
$\beta_i$& \bfseries 8& \bfseries 8& \bfseries 4& \bfseries 4& \bfseries -8 & \bfseries -4& \bfseries -4& \bfseries 3& &\\ \hline
 & $a_{3}$ & $a_{17}$ & $a_{26}$ & $a_{28}$ & $a_{19}$ &$a_{29}$ & $a_{30}$ & $a_{31}$& & \\\hline \hline 
$l_{0}$  &  8   &   8   &   4   &   4   &   4   &   2   &   2   &   1  &   11   & 10&-\\ \hline 
$l_{1}$  & -8   &   -8  &   0   &   0   &   -4  &   -2  &   0   &  -1  & 7  & 14&1\\ \hline 
$l_{2}$  & -8   &   0   &   -4  &   0   &   -4  & 0 & -2 &         -1  &   3  & 6&2\\ \hline 
$l_{3}$  & 8    &   0   &   0   &   0   & 4 & 0 & 0   &             1   &   3   & -6&3\\ \hline 
$l_{4}$  & 0    &   0   &   0   & -4 & 0 & -2 &   -2  &            -1  &   -1   & -2&9\\ \hline 
$l_{5}$  & 0   &   0   &   0   & 0 &0 & 2 &   0   &                 1   &    -1  & 2&11\\ \hline 
$l_{6}$  & 0   &   0   & 0   & 0 & 0 &0 & 2   &                     1   &   -1  & 2&12\\ \hline 
$l_{7}$  & 0   & 0   & 0   & 0   & 0 & 0 & 0&                      -1  &   3  & 6&13\\ \hline 
$l_{8}$  & 0   &0   &   -4  &   -4  &   0   &   -2  &   -2  &      -1  &   3   & 6&6\\ \hline
$l_{9}$  & 0   &   0   &   0   &   0   &   0   &   2   &   0    &   1   & -1  & 2&11\\ \hline 
$l_{10}$ & 0   &   0   &   4   &   0   &   0   &   0   &   2      & 1   &   3   & -6&8\\ \hline 
$l_{11}$ & 0   &   0   &   0   &   0   &   0   &   0   & 0   &     -1  &   3  &6 &13\\ \hline 
$l_{12}$ & 0   &   0   &   0   &   4   &   0   & 2   &   2   &      1   &   -1  & 2&9\\ \hline 
$l_{13}$ & 0   &   0   &   0   &   0   & 0   &   -2  &   0   &     -1  &   -1   & -2&11\\ \hline 
$l_{14}$ & 0   &   0   &   0   & 0   &   0   &   0   &   -2  &     -1  &   -1   & -2&12\\ \hline 
$l_{15}$ & 0   &   0   & 0   &   0   &   0   &   0   &   0   &      1   &   3   & -6&13\\ \hline 
$l_{16}$ & 0   & -8  &   -4  &   -4  &   -4  &   -2  &   -2  &     -1  &   3   & 6&4\\ \hline 
$l_{17}$ & 0   &   8   &   0   &   0   &   4   &   2   &   0   &    1   &-1 &  2&5\\ \hline 
$l_{18}$ & 0   &   0   &   4   &   0   &   4   &   0   &   2   &    1   &   -5 & 10 &7\\ \hline 
$l_{19}$ & 0   &   0   &   0   &   0   &   -4  &   0   &0   &      -1  &   -5  &  -10&10\\ \hline 
$l_{20}$ & 0   &   0   &   0   &   4   &   0   & 2   &   2   &      1   &   -1 & 2 &9\\ \hline 
$l_{21}$ & 0   &   0   &   0   &   0   & 0   &   -2  &   0   &     -1  &   -1  &  -2&11\\ \hline 
$l_{22}$ & 0   &   0   &   0   & 0   &   0   &   0   &   -2  &     -1  &   -1  &  -2&12\\ \hline 
$l_{23}$ & 0   &   0   &0   &   0   &   0   &   0   &   0   &       1   &   3 &  -6&13\\ \hline 
$l_{24}$ & 0   & 0   &   4   &   4   &   0   &   2   &   2   &      1   &   3 &  -6&6\\\hline 
$l_{25}$ & 0   &   0   &   0   &   0   &   0   &   -2  &   0   &   -1  &-1  &  -2&11\\ \hline 
$l_{26}$ & 0   &   0   &   -4  &   0   &   0   &   0   &   -2  &   -1  &   3  &6 &8\\ \hline 
$l_{27}$ & 0   &   0   &   0   &   0   &   0   &   0   &0   &       1   &   3   &-6 &13\\ \hline 
$l_{28}$ & 0   &   0   &   0   &   -4  &   0   &-2  &   -2  &      -1  &   -1   & -2&9\\ \hline 
$l_{29}$ & 0   &   0   &   0   &   0   &0   &   2   &   0   &       1   &   -1  & 2&11\\ \hline 
$l_{30}$ & 0   &   0   &   0   &0   &   0   &   0   &   2   &       1   &   -1  & 2&12\\ \hline 
$l_{31}$ & 0   &   0   &0   &   0   &   0   &   0   &   0   &      -1  &   3  & 6&13\\ \hline
\end{tabular}
\end{scriptsize}
\end{table}

Table \ref{tab_example} shows a portion of the LDM of order 5, where only the eight columns related to the distance function of the example are shown.
The distance tree constructed using Algorithm \ref{alg_branch1} and Algorithm \ref{alg_branch0} actually enumerates the distinct sums that can occur when the combined distance coefficients are added for each row of the LDM.
In Table \ref{tab_example}, the top most row denotes the combined coefficients $C_i$ of the example function and the row below denotes the combined distance coefficients $\beta_i$, obtained by multiplying the $C_i$'s with the representative distance coefficients (the positive value appearing in the associated column).
The values in this row constitute the coefficients of the distance function $F$ whose absolute maximum value is to be searched. The right most three columns of the table list the output values of the distance function $F$,
the Walsh coefficients of $f$ and which leaf node in the distance tree this row belongs, among the nodes listed in Table \ref{tab_tree}. Note that in Table \ref{tab_example}, for the
rows $l_i$ with $wt(\vectorize{i})$ being odd, that is, the rows corresponding to the linear functions with odd number of terms, $F$ outputs the negative of what must
actually be computed. This is because all values in the LDM are taken to be positive as a consequence of Proposition \ref{prop_ldmpositive}. The values listed in column $F$
of Table \ref{tab_example} correspond to $-\frac{W_f(\vectorize{i})}{2}$ for rows $l_i$ if $wt(\vectorize{i})$ is even, and $\frac{W_f(\vectorize{i})}{2}$ if $wt(\vectorize{i})$ is odd. This means that the absolute maximum value of the values produced in the distance tree can be used to find out the nonlinearity. The maximum value appearing in the distance tree is 7 which
appears in the left most leaf node. Therefore, the nonlinearity is $2^{n-1}-\frac{1}{2}\max\limits_{w\in \mathbb{F}_2^n}{|W_f(w)|}$,
which is $16 - 7=9$. It must also be noted that the first row of the LDM which is used to compute the weight is not taken into account here. Since the weight of the Boolean function can obtained by adding the combined distance coefficients produced in the first phase of the nonlinearity computation algorithm, 
it is sufficient to check whether this value is smaller than the nonlinearity value obtained in the second phase or not. The weight of the example function is $F(1,\cdots,1)=11$, which is larger than 9, so the nonlinearity is found to be 9. A better use of the weight computed in the first phase is to set it as the best solution and make use of the optimization techniques in the second phase for solving the integer programming problem more efficiently.

\subsection{Branch and Bound Method}

Branch and bound method is a way of efficiently solving integer programming problems by early terminating the processing of nodes if the best solution that can be obtained from a node will not attain the maximum/minimum value whatever the subsequent variable choices are \cite{Ch00}. This idea could be realized in this specific problem by computing one maximum and one minimum value for each variable in the optimization problem. Since the variables are processed one by one, this allows one to determine what the maximum and minimum change in the function could be, regardless of whether the assignment of inputs are feasible or not. The algorithm then can choose not to branch a node if it is guaranteed that neither of the solutions obtained from that node will be better than the best feasible solution already obtained.
For the distance function described in (\ref{eqn_bip}) having $k$ input bits, the maximum amount of change (both positive and negative) for each node can be computed as follows:
\begin{eqnarray}
max_i&=&\sum\limits_{i\leq x\leq k,\, \beta_x > 0}{\beta_x}\\
min_i&=&\sum\limits_{i\leq x\leq k,\, \beta_x < 0}{\beta_x}.
\end{eqnarray}
Values $max_i$ and $min_i$ specify how much the function can increase and decrease at most, once the first $i-1$ variables are fixed.
For instance, the function can increase the most if all of the subsequent coefficients with $\beta_x > 0$ are added and $\beta_x < 0$ are omitted. Hence,
at a particular node while constructing the distance tree, by using the $max_i$ and $min_i$ values, it can be checked whether 
the processed node can yield a larger absolute value than the best one at hand. If not, the processing of that branch of the tree is terminated at that point. If the
node promises to attain a better solution, the branching continues. However, this does not guarantee that a better solution will be obtained
since an increase (resp. decrease) of $max_i$ (resp. $min_i$) might not be possible if these inputs are not feasible.

\subsection{Recovering the Nearest Affine Function}
Besides computing the nonlinearity, it could be as much important to identify which affine function(s) a Boolean function is closest to. This can be accomplished by using the path string of the nonlinearity algorithm described above.
The path string identifies the rows in a LDM by specifying certain columns containing either zero or nonzero entries.
This problem can be rephrased as follows: 

Given $n$ and two sets of indices $I=\{i_1,\ldots,i_k\}$, $E=\{e_1,\ldots,e_l\}$ 
where the indices are from the set $\{0, \ldots ,2^n-1\}$. Identify the rows $r$ in $M^n$ satisfying $M_{r,j}^n\neq 0$ for $j\in I$ and 
$M_{r,j}^n=0$ for $j\in E$. 

Algorithm \ref{alg_nearestaffine} outputs a list of linear functions identified by the vector $x\in\mathbb{F}_2^n$, given the column index sets $I$ and $E$. The algorithm makes use of Remark \ref{rmk_ldm4} and \ref{rmk_ldm5} to list the rows of the LDM satisfying the given conditions. This algorithm can be executed for each leaf node of the distance tree to find out the
linear functions at a specified distance to the Boolean function in question. Linear functions listed in the last column of Table \ref{tab_tree}
are found by executing Algorithm \ref{alg_nearestaffine} with the column index sets $I$ and $E$ constructed from the corresponding path strings. For example, in the first row of the table, the path string is $11001101$, which makes $I=\{3,17,19,29,31\}$ and $E=\{26,28,30\}$.
There is only one row in the LDM whose column indices specified in $I$ are nonzero and column indices specified in $E$ are zero, and that row corresponds to the linear function $l_1=x_5$.
The decision of whether the Boolean function is closer to a linear function or its complement can be made based on the sign of the Walsh coefficient.

\begin{algorithm}
\caption{Enumerate the linear functions associated with a path in the distance tree}\label{alg_nearestaffine}
\begin{algorithmic}[1]
\Procedure{PathToLinearFunction}{$n,I,E$} 
\State $IncludeMask=\bigwedge\limits_{i\in I}{\vectorize{i}}$
\ForAll{$x\in \subvec{IncludeMask}$}
\State valid=\textbf{true}
\ForAll{$y\in E$}
\If{$(\neg y \wedge x) = \vectorize{0}$}
\State valid = \textbf{false}
\State break
\EndIf
\EndFor

\If{valid=\textbf{true}}
\State print $x$
\EndIf
\EndFor
\EndProcedure
\end{algorithmic}
\end{algorithm}

\subsection{Complexity of the Algorithm and Experimental Results}
The nonlinearity computation algorithm consists of two phases. In the first phase, from a given set of ANF coefficients, the combined coefficients are calculated. The number of combined coefficients is equal to the number of distinct products of input monomials. Although this value can be as high as $2^p$ ($p$ being the number of monomials) where each monomial combination is distinct, the actual value will vary depending on the structural relations between monomials. The expected value of this quantity is given in \cite{CD12} in terms of $n$ and $p$ as

\begin{equation*}\label{eqn_expected_s1b}
\sum\limits_{k=1}^{p}{(1-(1-(1-q)^k)^n) \binom{n}{k}}\\
\end{equation*}
where $q$ denotes the probability of a variable appearing in a monomial. For randomly generated Boolean functions, $q$ is $\frac{1}{2}$.

For the second phase of the algorithm where the binary integer programming problem is solved,
it is harder to give an explicit expression of the complexity. The difficulty of estimating the complexity is a result of the fact that it depends on the distribution of the values in the Walsh spectrum of the Boolean function. However, the expreriments indicate that the execution time of this phase is negligible compared to the first phase of the algorithm.

The proposed nonlinearity computation algorithm is implemented in C language and execution times
are measured for different parameters on a PC having an Intel Core2 Duo processor running at $3.0\mathrm{GHz}$.
Table \ref{tab_timing} gives the average running times for 60-variable Boolean functions with branch and bound method being employed. In the table, $p$ denotes the number of monomials, $k$ is the average number of combined distance coefficients, i.e., the number of variables
of the associated integer programming problem and the last column denotes the average running times of the algorithm. For each number of monomials in the experiments, average timings were calculated over 10 randomly generated Boolean functions. 
\begin{table}[!h]
\renewcommand{\arraystretch}{1.3}
\caption{Timings for $n=60$.}
\label{tab_timing} \centering
\begin{tabular}{c |r r r}
\hline p & k & Time (sec.)\\ \hline
30 &20603&1 \\
40 &54742&11 \\
50 &112823&49 \\
60 &215641&261 \\
70 &354954&973 \\
80 &584629&3191 \\
90 &833474&6878 \\
100&1176612&14938\\
\end{tabular}
\end{table}

\section{Conclusion}
An algorithm for computing the nonlinearity of a
Boolean function from its ANF coefficients is proposed. The
algorithm makes use of the formulation of the distance of a Boolean
function to the set of linear functions. It is shown that the
problem of computing the nonlinearity corresponds to a binary integer programming problem where techniques for efficiently solving these problems such as branch and bound method can be applied to improve the performance. The algorithm allows the computation of nonlinearity
for Boolean functions acting on large number of inputs where applying
the Fast Walsh transform is impractical.

\ifCLASSOPTIONcaptionsoff
  \newpage
\fi

\end{document}